\newcounter{MYtempeqncnt}
\title{Secrecy Sum-Rates with Regularized Channel Inversion\\Precoding under Imperfect CSI at the Transmitter}
\name{Giovanni~Geraci$^{1,2}$,~Romain~Couillet$^3$,~Jinhong~Yuan$^1$,~M\'{e}rouane~Debbah$^4$~and~Iain~B.~Collings$^2$}
\address{$^1$ The University of New South Wales and $^2$CSIRO ICT Centre, Sydney, Australia \\
$^3$Dept. of Telecommunications and $^4$Alcatel-Lucent Chair on Flexible Radio,
Sup\'{e}lec, France}
\begin{document}
\maketitle
\begin{abstract}
In this paper, we study the performance of regularized channel inversion precoding in MISO broadcast channels with confidential messages under imperfect channel state information at the transmitter (CSIT). We obtain an approximation for the achievable secrecy sum-rate which is almost surely exact as the number of transmit antennas and the number of users grow to infinity in a fixed ratio. Simulations prove this anaylsis accurate even for finite-size systems. For FDD systems, we determine how the CSIT error must scale with the SNR, and we derive the number of feedback bits required to ensure a constant high-SNR rate gap to the case with perfect CSIT. For TDD systems, we study the optimum amount of channel training that maximizes the high-SNR secrecy sum-rate.
\end{abstract}

\begin{keywords}
Physical layer security, limited feedback, broadcast channel, random matrix theory, linear precoding.
\end{keywords}
\section{Introduction}

Wireless multiuser communication is very susceptible to eavesdropping, and securing the transmitted information is critical. Although security has traditionally been ensured at the network layer by cryptographic schemes, this raises issues like key distribution and management, and high computational complexity. Moreover, these schemes rely on the unproven assumption that certain mathematical functions are hard to invert \cite{Mukherjee10Survey}. Therefore, tackling security for multiuser systems at the physical layer is of critical importance.

Physical layer security exploits the randomness inherent in noisy channels \cite{Wyner75,Csiszar78}. 
The secrecy capacity region of a broadcast channel with confidential messages (BCC), where the intended users can act maliciously as eavesdroppers, was studied in \cite{Liu08,Liu09Poor,Liu10}, but only for the case of two users. For a larger BCC with any number of malicious users, it was shown in \cite{GeraciISWCS11,Geraci11,GeraciWCNC12} that regularized channel inversion (RCI) precoding can achieve large secrecy sum-rates with low-complexity implementation, if the channel state information at the transmitter (CSIT) is perfect. However in real systems, CSIT must be acquired through channel feedback or estimation. This process can be challenging for practical time-varying scenarios, and the obtained information is inevitably imperfect.

In this paper, we study the secrecy sum-rates achievable by RCI precoding in the multiple-input single-output (MISO) BCC under imperfect CSIT. This work directly extends some of the analysis in \cite{Nguyen09,Wagner12} by requiring the transmitted messages to be kept confidential. Furthermore, this work generalizes the results in \cite{GeraciWCNC12}, where perfect CSIT was assumed. Our main contributions can be summarized as follows.
\begin{itemize}
\item We obtain a closed-form expression for the large-system secrecy sum-rate achievable by RCI precoding under imperfect CSIT. Simulations prove the analysis accurate even when applied to finite-size systems.
\item We study frequency division duplex (FDD) systems, and determine how the CSIT error must scale with the SNR, to maintain a given rate gap to the case with perfect CSIT. Under random vector quantization (RVQ), we find the minimum number of bits that each user must employ for channel quantization and feedback.
\item We study time division duplex (TDD) systems, and derive an expression for the secrecy sum-rate as a function of the amount of channel training. We obtain approximated expressions for the amount of channel training that maximizes the high-SNR secrecy sum-rate.
\end{itemize}
\section{System Model}
We consider a MISO BCC, where a base station (BS) with $M$ antennas simultaneously transmits $K$ independent confidential messages $\mathbf{u} = \left[u_1,\ldots,u_K \right]^{T}$, with E$[ \left|u_k\right|^2 ] =1$, to $K$ spatially dispersed single-antenna users, over a block fading channel. The transmitted signal is $\mathbf{x} = \left[x_1,\ldots,x_M \right]^{T}$. User $k$ receives
\vspace*{-0.2cm}
\begin{equation}
y_k=\sum_{j=1}^{M} h_{k,j}x_{j}+n_{k}
\label{eqn:MIMO_scalar}
\end{equation}
where $h_{k,j} \sim \mathcal{CN}(0,1)$ is the i.i.d. channel between the $j$-th transmit antenna element and the $k$-th user, and $n_{k} \sim \mathcal{CN}(0,\sigma^2)$ is the noise seen at the $k$-th receiver. 
The corresponding vector equation is $\mathbf{y}=\mathbf{Hx}+\mathbf{n}$, where $\mathbf{H} = \left[\mathbf{h}_1,\ldots,\mathbf{h}_K \right]$ is the $K \times M$ channel matrix. We define the downlink SNR as $\rho = 1/ \sigma ^2$ and impose E$[ \left\| \mathbf{x} \right\|^{2} ] =1$.

Only an imperfect estimate $\hat{\mathbf{H}}$ of the true channel $\mathbf{H}$ is available at the transmitter, which is modeled as
\vspace*{-0.1cm}
\begin{equation}
\mathbf{H} = \mathbf{\hat{H}} + \mathbf{E}
\label{eqn:CSI_model}
\end{equation}
where the error $\mathbf{E}$ is independent from $\mathbf{\hat{H}}$. The entries of $\mathbf{\hat{H}}$ and $\mathbf{E}$ are i.i.d. complex Gaussian random variables with zero mean and variances $1-\tau^2$ and $\tau^2$, respectively. The value of $\tau \in [0,1]$ depends on the quality and technique used for channel estimation. When $\tau=0$ the CSIT is perfectly known, whereas when $\tau=1$ no CSIT is available at all.

It is required that the BS securely transmits each confidential message $u_k$, ensuring that the unintended users receive no information. In general, the behavior of the users cannot be determined by the BS. As a worst-case scenario, in our system we assume that for each intended receiver $k$ the remaining $K-1$ users can cooperate to jointly eavesdrop on the message $u_k$.
For each user $k$, the alliance of the $K-1$ cooperating eavesdroppers is equivalent to a single eavesdropper with $K-1$ receive antennas, which is denoted by $\widetilde{k}$.

In this paper, we consider RCI precoding for the MISO BCC \cite{Peel05,Joham05}. RCI precoding is a linear scheme of particular interest because of its low-complexity \cite{Spencer04Magazine,Li10a}, and because it controls the amount of interference and information leakage to the unintended receivers \cite{GeraciISWCS11,Geraci11,GeraciWCNC12}. In RCI precoding, the transmitted vector $\mathbf{x}$ is obtained as $\mathbf{x} = \hat{\mathbf{W}}\mathbf{u}$, 
where \cite{Peel05,Joham05}
\begin{equation}
\hat{\mathbf{W}} = \left[\hat{\mathbf{w}}_1,\ldots,\hat{\mathbf{w}}_K \right] = \frac{1}{\sqrt{\gamma}} \left( \mathbf{\hat{H}}^H \mathbf{\hat{H}} + M \xi \mathbf{I} \right) ^{-1} \mathbf{\hat{H}}^H
\end{equation}
is the precoding matrix, $\gamma = \textrm{tr} \{ \mathbf{\hat{H}} ( \mathbf{\hat{H}}^H \mathbf{\hat{H}} + M \xi \mathbf{I} ) ^{-2} \mathbf{\hat{H}}^H \}$
is the power normalization constant, and $\xi$ is the regularization parameter that maximizes the large-system secrecy sum-rate under perfect CSIT, given in (\ref{eqn:xi_opt}) \cite{GeraciWCNC12}.
\begin{figure*}[!t]
\normalsize
\setcounter{MYtempeqncnt}{\value{equation}}
\setcounter{equation}{3}
\begin{equation}
\xi = \frac {-2\rho^2\left( 1 - \beta \right)^2 + 6\rho \beta + 2\beta^2 - 2 \left[ \beta \left( \rho+1 \right) -\rho \right] \cdot \sqrt{ \beta^2 \left[ \rho^2 + \rho + 1 \right] - \beta \left[ 2 \rho \left( \rho -1 \right) \right] + \rho^2 } } {6 \rho^2 \left( \beta + 2 \right) + 6 \rho \beta}
\label{eqn:xi_opt}
\end{equation}
\setcounter{equation}{\value{MYtempeqncnt}}
\addtocounter{equation}{1}
\hrulefill
\vspace*{-3pt}
\end{figure*}
\section{Large-System Analysis}

We now study the secrecy sum-rate of the RCI precoder in the large-system regime, where both the number of receivers $K$ and the number of transmit antennas $M$ approach infinity, with their ratio $\beta = K/M$ being held constant.

A secrecy sum-rate achievable by RCI precoding in the MISO BCC under perfect CSIT was obtained in \cite{Geraci11}, based on the work in \cite{Khisti10I}. Following a similar approach, a secrecy sum-rate achievable by RCI precoding in the presence of channel estimation error with variance $\tau^2$ is
\begin{equation}
R_{s} \!=\! \sum_{k=1}^{K} \left[ \log_2 \Big( 1 \!+\! \mathrm{SINR}_{k} \Big) \!-\! \log_2 \Big( 1 \!+\! \mathrm{SINR}_{\widetilde{k}} \Big) \right]^+,
\label{eqn:Rs}
\end{equation}
where $[\cdot]^+ \stackrel{\triangle}{=} \max(\cdot,0)$, and where $\mathrm{SINR}_{k}$ and $\mathrm{SINR}_{\tilde{k}}$ are the signal-to-interference-plus-noise ratios for the message $u_k$ at the intended receiver $k$ and the eavesdropper $\widetilde{k}$, respectively, given by
\begin{equation}
\mathrm{SINR}_{k} \!\!=\!\! \frac {\rho \left| \mathbf{h}_k^H \hat{\mathbf{w}}_k \right| ^2} {1 \!\!+\!\! \rho \sum_{j \neq k} {\left| \mathbf{h}_k^H \hat{\mathbf{w}}_j \right| ^2} }, \enspace
\mathrm{SINR}_{\widetilde{k}} \!\!=\!\! \rho \left\| \mathbf{H}_k \hat{\mathbf{w}}_{k} \right\| ^2
\end{equation}
and $\mathbf{H}_k$ is the matrix obtained from $\mathbf{H}$ by removing the $k$-th row.
We now give a deterministic equivalent for $R_s$.

\newtheorem{Theorem}{Theorem}
\begin{Theorem}
Let $\rho>0$, $\beta>0$, and let $R_s$ be the secrecy sum-rate in the presence of channel estimation error with variance $\tau^2$, defined in (\ref{eqn:Rs}). Define $\tilde{\rho} \stackrel{\triangle}{=} \frac{\rho\left(1-\tau^2\right)}{\rho \tau^2 + 1}$ and $\tilde{\xi} \stackrel{\triangle}{=} \frac{\xi}{1 - \tau^2}$. Then $\frac{1}{M} \left( R_s - R_s^{\circ} \right) \stackrel{M \rightarrow \infty}{\longrightarrow} 0$
almost surely, where $R_s^{\circ}$ is the large-system secrecy sum-rate under CSIT error, given by
\begin{equation}
R_s^{\circ} = K \left[ \log_2 \frac{1+ g ( \beta,\tilde{\xi} ) \frac{\tilde{\rho} + \frac{ \tilde{\xi} \tilde{\rho}}{\beta} \left[ 1 + g ( \beta,\tilde{\xi} ) \right]^2}{\tilde{\rho} + \left[ 1 + g ( \beta,\tilde{\xi} )  \right]^2 } }{1 + \rho \left[ \tau^2 + \frac{1-\tau^2}{\left(1 + g ( \beta,\tilde{\xi} )\right)^2} \right]} \right]^+
\label{eqn:Rs_deteq}
\end{equation}
with $g \left( \beta,\xi \right) \!=\!  \frac{1}{2} \left[ \textrm{sgn}(\xi) \sqrt{ \frac{\left(1\!-\!\beta \right)^2}{\xi^2}  \!+\!  \frac{2\left(1\!+\!\beta\right)}{\xi}  \!+\!  1} +  \frac{1\!-\!\beta}{\xi}  \!-\!  1 \right]$.
\end{Theorem}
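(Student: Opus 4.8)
The plan is to determine the almost-sure limit of each individual $\mathrm{SINR}$ using deterministic-equivalent tools from random matrix theory, then to pass those limits through the continuous maps $x\mapsto\log_2(1+x)$ and $[\cdot]^+$, and finally to invoke the statistical exchangeability of the users to collapse $\frac{1}{M}\sum_k$ into the single prefactor $K$ in $R_s^{\circ}$. The first step is to absorb the variance of the estimate into the regularization: writing $\hat{\mathbf{H}}=\sqrt{1-\tau^2}\,\bar{\mathbf{H}}$ with $\bar{\mathbf{H}}$ having unit-variance i.i.d.\ entries yields $\hat{\mathbf{H}}^H\hat{\mathbf{H}}+M\xi\mathbf{I}=(1-\tau^2)\bigl(\bar{\mathbf{H}}^H\bar{\mathbf{H}}+M\tilde{\xi}\mathbf{I}\bigr)$. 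This identity is exactly why the effective regularization $\tilde{\xi}=\xi/(1-\tau^2)$ enters, and it reduces every resolvent quantity to the unit-variance ensemble whose empirical spectral distribution is Marchenko--Pastur and whose relevant resolvent functionals are governed by the closed-form fixed point $g(\beta,\tilde{\xi})$.

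Next I would split the true channel as $\mathbf{h}_k=\hat{\mathbf{h}}_k+\mathbf{e}_k$ and expand the quadratic forms defining $\mathrm{SINR}_k$ and $\mathrm{SINR}_{\widetilde{k}}$. The structural fact that makes this tractable is that $\hat{\mathbf{W}}$ depends on $\hat{\mathbf{H}}$ only, so $\mathbf{e}_k$ is independent of $\hat{\mathbf{W}}$; the trace lemma then sends every cross term $\mathbf{e}_k^H\hat{\mathbf{w}}_j$ to zero and concentrates $|\mathbf{e}_k^H\hat{\mathbf{w}}_j|^2$ onto $\tau^2\|\hat{\mathbf{w}}_j\|^2$. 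Because the normalization by $\gamma$ forces $\sum_j\|\hat{\mathbf{w}}_j\|^2=\|\hat{\mathbf{W}}\|_F^2=1$, the error contributions aggregate to the leakage floor $\rho\tau^2$, while the estimated-channel interference, obtained by decoupling $\hat{\mathbf{h}}_k$ from the resolvent $(\hat{\mathbf{H}}^H\hat{\mathbf{H}}+M\xi\mathbf{I})^{-1}$ via the Sherman--Morrison rank-one identity, concentrates to $\rho(1-\tau^2)/(1+g)^2$. Combining these, the eavesdropper ratio converges to $\mathrm{SINR}_{\widetilde{k}}^{\circ}=\rho[\tau^2+(1-\tau^2)/(1+g)^2]$, and the same decoupling applied to the signal quadratic form $\hat{\mathbf{h}}_k^H(\hat{\mathbf{H}}^H\hat{\mathbf{H}}+M\xi\mathbf{I})^{-1}\hat{\mathbf{h}}_k$ together with the fixed-point characterization of $g$ yields $\mathrm{SINR}_k^{\circ}=g\,\frac{\tilde{\rho}+\frac{\tilde{\xi}\tilde{\rho}}{\beta}(1+g)^2}{\tilde{\rho}+(1+g)^2}$, the effective SNR $\tilde{\rho}$ emerging once the $(1-\tau^2)$ signal scaling is combined with the $1+\rho\tau^2$ effective noise. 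Since $R_s=\sum_k[\log_2\frac{1+\mathrm{SINR}_k}{1+\mathrm{SINR}_{\widetilde{k}}}]^+$, substituting the two limits reproduces the bracketed expression in (\ref{eqn:Rs_deteq}).

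The delicate point, and the step I expect to be the main obstacle, is the rigorous control of the quadratic forms in which the probing vector is correlated with the matrix that generates it, namely the signal form $\hat{\mathbf{h}}_k^H(\hat{\mathbf{H}}^H\hat{\mathbf{H}}+M\xi\mathbf{I})^{-1}\hat{\mathbf{h}}_k$ and, above all, the eavesdropper leakage $\|\mathbf{H}_k\hat{\mathbf{w}}_k\|^2=\sum_{j\neq k}|\mathbf{h}_j^H\hat{\mathbf{w}}_k|^2$. The leakage is the hardest because it aggregates $K-1$ interferers and couples the off-diagonal structure of $\hat{\mathbf{W}}$; making it converge requires repeated rank-one perturbations and uniform almost-sure bounds on the resolvent so that the deterministic equivalent governed by $g$ propagates to the whole aggregate rather than to a single term. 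Once each $\mathrm{SINR}$ is shown to converge almost surely to its deterministic equivalent, continuity of $x\mapsto\log_2(1+x)$ and of $[\cdot]^+$ transfers the convergence to each per-user rate, and the almost-sure boundedness of these rates together with the $\frac{1}{M}$ normalization and averaging over the exchangeable users lets me conclude $\frac{1}{M}(R_s-R_s^{\circ})\to 0$ almost surely.
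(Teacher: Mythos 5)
Your proposal is correct in substance, but it takes a genuinely different route from the paper: the paper's proof is essentially a citation, taking the deterministic equivalent (\ref{eqn:SINR_hat_deteq}) for $\mathrm{SINR}_k$ directly from \cite{Nguyen09}, stating ``similarly'' the analogous equivalent (\ref{eqn:SINR_tilde_hat_deteq}) for $\mathrm{SINR}_{\widetilde{k}}$ (which follows from the perfect-CSIT leakage analysis of \cite{GeraciWCNC12} after the same variance rescaling you use), and then invoking the continuous mapping theorem \cite{BillingsleyBook}. You instead rederive both equivalents from first principles, and your mechanics are sound: the factorization $\hat{\mathbf{H}}^H\hat{\mathbf{H}}+M\xi\mathbf{I}=(1-\tau^2)\bigl(\bar{\mathbf{H}}^H\bar{\mathbf{H}}+M\tilde{\xi}\mathbf{I}\bigr)$ correctly explains the appearance of $\tilde{\xi}$; the split $\mathbf{h}_k=\hat{\mathbf{h}}_k+\mathbf{e}_k$ with the trace lemma and the normalization $\|\hat{\mathbf{W}}\|_F^2=1$ correctly produces the $\rho\tau^2$ leakage floor; and the Sherman--Morrison decoupling yields both $\rho(1-\tau^2)/(1+g)^2$ and the effective SNR $\tilde{\rho}$, which is exactly the mechanism underlying the results of \cite{Nguyen09,Wagner12} that the paper imports wholesale. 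What your approach buys is self-containedness and a transparent account of where $\tilde{\rho}$ and $\tilde{\xi}$ come from; what the paper's approach buys is brevity, by outsourcing the concentration work you would have to carry out in full. One point to tighten in your write-up: exchangeability of the users is not by itself the right tool for concluding $\frac{1}{M}(R_s-R_s^{\circ})\rightarrow 0$ almost surely, since the per-user rates are dependent and $K$ grows with $M$; the standard route (as in \cite{Wagner12}) is to establish convergence of the SINRs to their deterministic equivalents \emph{uniformly} in $k$, via moment bounds of order higher than two on the quadratic-form fluctuations combined with a union bound and Borel--Cantelli. Your mention of uniform almost-sure resolvent bounds points in exactly this direction and should be promoted to the explicit mechanism, with the appeal to exchangeability dropped.
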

\begin{proof}
From \cite{Nguyen09}, a deterministic equivalent for $\textrm{SINR}_k$ is
\begin{equation}
\textrm{SINR}_k^{\circ} = g ( \beta,\tilde{\xi} ) \frac{\tilde{\rho} + \frac{ \tilde{\xi} \tilde{\rho}}{\beta} \left[ 1 + g ( \beta,\tilde{\xi} ) \right]^2}{\tilde{\rho} + \left[ 1 + g ( \beta,\tilde{\xi} )  \right]^2 }.
\label{eqn:SINR_hat_deteq}
\end{equation}
Similarly, a deterministic equivalent for $\textrm{SINR}_{\tilde{k}}$ is given by
\begin{equation}
\textrm{SINR}_{\tilde{k}}^{\circ} = \rho \left[ \tau^2 + \frac{1-\tau^2}{\left(1 + g ( \beta,\tilde{\xi} )\right)^2} \right].
\label{eqn:SINR_tilde_hat_deteq}
\end{equation}
Theorem 1 then follows from (\ref{eqn:Rs}), (\ref{eqn:SINR_hat_deteq}), (\ref{eqn:SINR_tilde_hat_deteq}), and from the continuous mapping theorem \cite{BillingsleyBook}.
\end{proof}

When $\tau=0$, the value of $R_s^{\circ}$ in (\ref{eqn:Rs_deteq}) reduces to the large-system secrecy sum-rate under perfect CSIT, given by \cite{GeraciWCNC12}
\begin{equation}
\bar{R}_s^{\circ} = K \left[ \log_2 \frac{1+
g\left( \beta,\xi \right)
\frac{\rho + \frac{\rho\xi}{\beta} \left[ 1 + g\left( \beta,\xi \right) \right] ^2 
}{\rho + \left[ 1 + g\left( \beta,\xi \right) \right] ^2}}
{1+\frac{\rho}{ \left( 1+g\left( \beta,\xi \right) \right) ^2}} \right]^+.
\label{eqn:Rs_deteq_tau0}
\end{equation}

Fig. \ref{fig:Noisy_CSI_20dB} compares the secrecy sum-rate $R_s^{\circ}$ of the RCI precoder from the large-system analysis to the simulated ergodic secrecy sum-rate $R_s$ for finite $M$, under CSIT error $\tau=0.1$, and for different values of $\beta$. The values of $R_s^{\circ}$ and $R_s$ were obtained by (\ref{eqn:Rs_deteq}) and (\ref{eqn:Rs}), respectively. As expected, the accuracy of the deterministic equivalent increases as $M$ grows.

\begin{figure}
\centering
\includegraphics[width=\columnwidth]{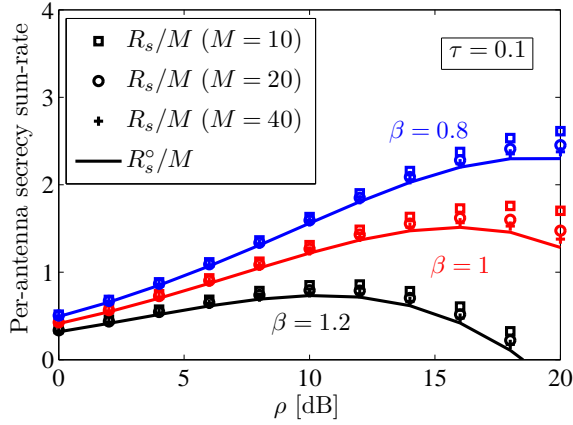}
\caption{Comparison between $R_s^{\circ}/M$ from (\ref{eqn:Rs_deteq}) and the simulated ergodic secrecy sum-rate $R_s/M$.}
\label{fig:Noisy_CSI_20dB}
\end{figure}

\section{Channel Feedback in FDD Systems}

In this section we consider the case of FDD systems, where users quantize their perfectly estimated channel vectors and send the quantization index back to the transmitter over a limited-rate channel. We assume that the channel magnitude is perfectly known to the transmitter, since it can be efficiently quantized, and that each channel direction is quantized using $B$ bits and random vector quantization (RVQ) \cite{Jindal06}. In RVQ, each user independently generates a random codebook with $2^B$ vectors, isotropically distributed on the $M$-dimensional unit sphere. RVQ generates a CSIT that follows the model in (\ref{eqn:CSI_model}), where the error $\tau^2$ can be upper bounded as \cite{Jindal06}
\begin{equation}
\tau^2 < 2^{-\frac{B}{M-1}}.
\label{eqn:bound_tau}
\end{equation}

In the following, we derive the necessary scaling of the channel estimation error $\tau^2$ to ensure a high-SNR constant rate gap of $\log_2 b$ bits to the case with perfect CSIT. A constant gap ensures that the multiplexing gain is not affected by the imperfect CSIT. We define the per-user rate gap $\Delta$ as 
\begin{equation}
\Delta \stackrel{\triangle}{=} \frac{\bar{R}_{s}^{\circ}-R_{s}^{\circ}}{K}
\end{equation}
where $\bar{R}_{s}^{\circ}$ and $R_{s}^{\circ}$ are the large-system secrecy sum-rates achieved by the RCI precoder under perfect CSIT and under CSIT distortion $\tau^2$, respectively.

It can be shown from (\ref{eqn:Rs_deteq_tau0}) that for $\beta>1$ the high-SNR secrecy sum-rate is zero irrespective of the distortion $\tau$, and this is also confirmed by the simulations in Fig. \ref{fig:Noisy_CSI_20dB}. This happens because in the worst-case scenario, the alliance of cooperating eavesdroppers can cancel the interference. Therefore when the number of users $K$ exceeds the number of antennas $M$, $\textrm{SINR}_k$ is interference limited, whereas $\textrm{SINR}_{\tilde{k}}$ is not, and $R_s$ tends to zero for all $\tau$. Since for $\beta>1$ both $\bar{R}_{s}^{\circ}$ and $R_{s}^{\circ}$ tend to zero in the high-SNR regime, in the remainder of the paper we will consider the case $\beta \leq 1$ as the most interesting. For this case, we obtain the following result.

\begin{figure}
\centering
\includegraphics[width=\columnwidth]{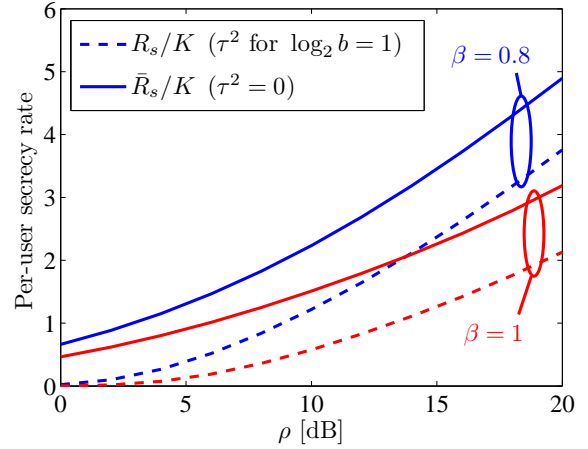}
\caption{Ergodic secrecy rate $\bar{R}_s/K$ under perfect CSIT vs $R_s/K$ under $\tau^2$ from Theorem 2, with $\log_2 b = 1$ bit.}
\label{fig:Limited_feedback}
\end{figure} 

\begin{Theorem}
In the large-system regime, for $\beta \leq 1$ and $b>1$, a CSIT distortion $\tau^2 = \frac{C}{\rho}$, with
\begin{equation}
C = \left\{ 
  \begin{array}{l l l}
     \frac{1}{2} \left( \sqrt{4b-3} - 1 \right) & \quad \textrm{if $\beta < 1$} \\
     \frac{2}{3} \left( \sqrt{3b-2} - 1 \right) & \quad \text{if $\beta = 1$} \\
  \end{array} \right.
\label{eqn:C}
\end{equation}
produces a high-SNR rate gap of $\log_2 b $ bits.
\end{Theorem}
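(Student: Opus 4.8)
The plan is to evaluate $\lim_{\rho\to\infty}\Delta$, with $\Delta = (\bar R_s^\circ - R_s^\circ)/K$, directly from the deterministic equivalents of Theorem~1 after substituting $\tau^2 = C/\rho$. Since the claimed gap $\log_2 b$ is a finite constant, the whole computation is an exercise in keeping exactly the right order in the high-SNR expansion: the two legitimate-rate terms each diverge but their difference is $O(1)$, whereas the two eavesdropper terms stay bounded and converge to constants. With $\tau^2 = C/\rho$ one immediately obtains the clean intermediate quantities $\tilde\rho = (\rho-C)/(C+1)$ and $\tilde\xi = \xi/(1-C/\rho)$, so everything reduces to the high-SNR behaviour of $\xi$ and of $g(\beta,\cdot)$.

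The first and most delicate step is the asymptotics of the regularization $\xi$ in (\ref{eqn:xi_opt}). Expanding the closed form as $\rho\to\infty$, the $\rho^2$ contributions of the numerator cancel against the radical $\sqrt{(1-\beta)^2\rho^2 + \beta(\beta+2)\rho + \beta^2}$, leaving $\xi = \Theta(1/\rho)$; one finds $\xi \sim \beta/(2\rho)$ for $\beta<1$ but $\xi \sim 1/(3\rho)$ at $\beta=1$, because the radical degenerates when $1-\beta=0$. Feeding $\tilde\xi\to 0$ into the definition of $g$ and Taylor-expanding the square root there gives the crucial dichotomy $g\sim c\,\rho$ when $\beta<1$ but only $g\sim c'\sqrt{\rho}$ when $\beta=1$. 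This is exactly the split that produces the two rows of (\ref{eqn:C}), so both regimes must be carried in parallel from here on.

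Next I would substitute these expansions into (\ref{eqn:SINR_hat_deteq}), (\ref{eqn:SINR_tilde_hat_deteq}) and their $\tau=0$ specialisations in (\ref{eqn:Rs_deteq_tau0}). The legitimate SINRs $\mathrm{SINR}_k^\circ$ and $\overline{\mathrm{SINR}}_k^\circ$ both diverge, so $\log_2\frac{1+\overline{\mathrm{SINR}}_k^\circ}{1+\mathrm{SINR}_k^\circ}$ keeps only the ratio of their leading coefficients; the eavesdropper term $\mathrm{SINR}_{\tilde k}^\circ = \rho\tau^2 + \rho(1-\tau^2)/(1+g)^2$ converges to $C$ plus a residual-leakage term that vanishes for $\beta<1$ (there $(1+g)^2\sim\rho^2$) but tends to the finite constant $\lim\rho\xi = \tfrac{1}{3}$ for $\beta=1$. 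Collecting the two surviving log-ratios expresses $\Delta$ as an explicit algebraic function of $C$ (and, for $\beta=1$, of that leakage constant); imposing $\Delta = \log_2 b$ then gives a quadratic in $C$ whose positive root is the stated $C$ in (\ref{eqn:C}), one expression per case. A final verification that each $\log_2$ argument exceeds $1$ lets me drop the $[\cdot]^+$.

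The main obstacle is precisely the asymptotics of (\ref{eqn:xi_opt}): I need the exact leading constant of $\xi$, not merely its order, since at $\beta=1$ that constant resurfaces as the finite high-SNR value of the eavesdropper SINR and hence enters the final formula, while the degeneracy of the radical at $\beta=1$ forces genuinely separate expansions of both $\xi$ and $g$. The second delicate point is the bookkeeping in the diverging legitimate ratio---deciding which subleading terms may be discarded without corrupting the $O(1)$ limit---which is where an error is most likely to creep in.
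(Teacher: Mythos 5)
Your preparatory asymptotics are all correct and are exactly the ones underlying the paper's proof: from (\ref{eqn:xi_opt}) the radical equals $\sqrt{(1-\beta)^2\rho^2+\beta(\beta+2)\rho+\beta^2}$, giving $\xi\sim\beta/(2\rho)$ for $\beta<1$ and $\xi\sim 1/(3\rho)$ at $\beta=1$; hence $g(\beta,\tilde\xi)\sim 2(1-\beta)\rho/\beta$ versus $g\sim\sqrt{3\rho}$, $\tilde\rho=(\rho-C)/(C+1)$, and the residual leakage $\rho(1-\tau^2)/(1+g)^2$ vanishes for $\beta<1$ and tends to $\lim\rho\xi=1/3$ at $\beta=1$. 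The paper's proof is the same limit computation, but compressed into a single asserted identity: with $\mu=\tau^2\rho+\frac{\tau^4\rho(1+g)^2}{1-\tau^2}+\frac{\tau^2(1+g)^2}{1-\tau^2}$ it claims $\lim_{\rho\to\infty}\Delta=\lim_{\rho\to\infty}\log_2\bigl[1+\frac{\beta^2\mu}{4\rho(1-\beta)^2}\bigr]$ for $\beta<1$ and $\lim\log_2\bigl[1+\frac{\mu}{4}\bigr]$ for $\beta=1$; plugging $(1+g)^2\approx 4(1-\beta)^2\rho^2/\beta^2$ (resp.\ $3\rho$) and $\tau^2=C/\rho$ yields $b=1+C+C^2$ (resp.\ $b=1+C+\frac{3C^2}{4}$), and (\ref{eqn:C}) is precisely the positive root of these quadratics.

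The genuine gap in your proposal is the one step where the proof actually lives, which you wave through with ``gives a quadratic in $C$ whose positive root is the stated $C$.'' If you execute exactly the bookkeeping you describe --- ratio of leading coefficients for the two diverging legitimate terms, plain limits for the bounded eavesdropper terms --- applied to (\ref{eqn:SINR_hat_deteq}), (\ref{eqn:SINR_tilde_hat_deteq}) and (\ref{eqn:Rs_deteq_tau0}), you get for $\beta<1$ that $\mathrm{SINR}_k^{\circ}\approx(1-\beta)\tilde\rho/\beta$ with $\tilde\rho\approx\rho/(C+1)$, so the legitimate ratio contributes $\log_2(C+1)$, while the eavesdropper ratio contributes $\log_2\frac{1+C}{1+0}=\log_2(1+C)$; hence
\begin{equation}
\lim_{\rho \rightarrow \infty}\Delta=2\log_2(1+C), \qquad \text{i.e.} \qquad b=(1+C)^2,
\end{equation}
whose positive root is $C=\sqrt{b}-1$, not $\frac{1}{2}(\sqrt{4b-3}-1)$; at $\beta=1$ the same procedure gives both factors equal to $1+\frac{3C}{4}$ (the legitimate SINRs become $g/2$ and $2g/(3C+4)$, the eavesdropper ratio $\frac{4/3+C}{4/3}$), so $b=\bigl(1+\frac{3C}{4}\bigr)^2$, again not the theorem's $b=1+C+\frac{3C^2}{4}$. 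The two quadratics differ already in the linear term ($2C$ versus $C$, and $\frac{3C}{2}$ versus $C$), so this is not a higher-order subtlety your expansion could absorb. Your proposal therefore neither derives the paper's $\mu$-identity --- which is the entire nontrivial content of its proof and cannot follow from the leading-coefficient bookkeeping you outline --- nor notices that the direct expansion you plan fails to land on (\ref{eqn:C}). To close the argument you would have to either establish that identity by a more exact finite-$\rho$ manipulation before taking limits, or confront and resolve the mismatch; as written, the endpoint is asserted rather than reached, precisely at the step you yourself flagged as the place ``where an error is most likely to creep in.''
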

\begin{proof}
Let $\mu \!\stackrel{\triangle}{=}\! \tau^2\rho \!+\! \frac{\tau^4\rho(1+g ( \beta,\xi ))^2}{1-\tau^2} \!+\! \frac{\tau^2(1+g ( \beta,\xi ))^2}{1-\tau^2}$. Then
\begin{equation}
\lim\limits_{\rho \rightarrow \infty} \Delta \!=\! \left\{ 
  \begin{array}{l l l}
     \!\!\lim\limits_{\rho \rightarrow \infty} \log_2 \left[ 1\!+\!\frac{\beta^2\mu}{4\rho(1-\beta)^2}  \right] \!=\! \log_2 b & \!\!\enspace \textrm{if $\beta < 1$} \\
     \!\!\lim\limits_{\rho \rightarrow \infty} \log_2 \left[  1\!+\!\frac{\mu}{4} \right] = \log_2 b & \!\!\enspace \text{if $\beta = 1$} \\
  \end{array} \right.
\end{equation}
\end{proof}

\newtheorem{Corollary}{Corollary}
\begin{Corollary}
In order to maintain a high-SNR secrecy rate offset of $\log_2 b$ bits per user in the large-system regime, it is sufficient to scale the number of feedback bits $B$ per user as
\begin{equation}
B \! \approx \! \left\{ 
  \begin{array}{l l l}
     \!\!\!\frac{M-1}{3} \rho_{\textrm{dB}} \!-\! \left( M\!\!-\!\!1 \right) [ \log_2 \left( \sqrt{4b-3} \!-\! 1 \right) \!-\! 1] & \enspace\!\!\! \textrm{if $\beta < 1$} \\
     \!\!\!\frac{M-1}{3} \rho_{\textrm{dB}} \!-\! \left( M\!\!-\!\!1 \right) [ \log_2 \frac{\sqrt{3b-2} - 1}{3} \!+\! 1] & \enspace\!\!\! \text{if $\beta = 1$} \\
  \end{array} \right.
\label{eqn:B}
\end{equation}
\end{Corollary}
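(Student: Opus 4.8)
The plan is to chain Theorem 2 with the RVQ distortion bound (\ref{eqn:bound_tau}). Theorem 2 already tells me exactly how much CSIT distortion the system can tolerate: setting $\tau^2 = C/\rho$, with $C$ as in (\ref{eqn:C}), produces a high-SNR gap of precisely $\log_2 b$ bits. Since the per-user gap $\Delta$ is monotonically increasing in $\tau^2$ (a point I would note explicitly, as it is what makes the word \emph{sufficient} in the statement correct), it is enough to guarantee $\tau^2 \leq C/\rho$ in order to keep the offset at no more than $\log_2 b$ bits. The RVQ scheme, meanwhile, delivers $\tau^2 < 2^{-B/(M-1)}$, so my target is simply to pick $B$ large enough that this RVQ-guaranteed distortion does not exceed the tolerance $C/\rho$ prescribed by Theorem 2.

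First I would equate the two: choosing $B$ so that $2^{-B/(M-1)} = C/\rho$ forces $\tau^2 < C/\rho$ via (\ref{eqn:bound_tau}), hence a gap strictly below $\log_2 b$. Solving this relation for $B$ gives $B = (M-1)\left(\log_2 \rho - \log_2 C\right)$. Converting the SNR to the decibel scale through $\log_2 \rho = \rho_{\textrm{dB}}/(10\log_{10} 2) \approx \rho_{\textrm{dB}}/3$ then yields the generic expression $B \approx \tfrac{M-1}{3}\rho_{\textrm{dB}} - (M-1)\log_2 C$.

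The final step is to substitute the two branches of $C$ from (\ref{eqn:C}) and simplify the constant term. For $\beta<1$, using $\log_2 \tfrac{1}{2}\!\left(\sqrt{4b-3}-1\right) = \log_2\!\left(\sqrt{4b-3}-1\right) - 1$ reproduces the first line of (\ref{eqn:B}); for $\beta=1$, using $\log_2 \tfrac{2}{3}\!\left(\sqrt{3b-2}-1\right) = \log_2\tfrac{\sqrt{3b-2}-1}{3} + 1$ reproduces the second. This is essentially a one-line inversion, so there is no real analytical obstacle; the only thing to be careful about is the direction of the inequality in (\ref{eqn:bound_tau}), which is what turns the exact condition of Theorem 2 into a sufficient condition, and the fact that the result is stated as an approximation for two reasons: the RVQ bound is strict rather than tight, and the decibel conversion uses $10\log_{10} 2 \approx 3.01 \approx 3$.
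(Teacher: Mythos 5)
Your proposal is correct and follows exactly the route of the paper, whose proof is the one-line remark that the result follows from the RVQ bound (\ref{eqn:bound_tau}) and Theorem 2; you simply spell out the algebra (setting $2^{-B/(M-1)} = C/\rho$, solving for $B$, converting $\log_2\rho \approx \rho_{\textrm{dB}}/3$, and substituting the two branches of $C$), and all of it checks out. Your added remarks on the monotonicity of $\Delta$ in $\tau^2$ and on the strict inequality in (\ref{eqn:bound_tau}) are sound refinements of the same argument, not a different approach.
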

\begin{proof}
The result follows from (\ref{eqn:bound_tau}) and Theorem 2.
\end{proof}

Fig. \ref{fig:Limited_feedback} shows the ergodic per-user secrecy rate $R_s/K$, achieved by the RCI precoder for $M=10$ in the presence of a channel estimation error that scales as in Theorem 2, with $\log_2 b = 1$ bit. This is compared to the ergodic rate $\bar{R}_s/K$, achieved by the same precoder under perfect CSIT ($\tau=0$). The simulations show that the desired secrecy rate gap of $1$ bit is approximately maintained at high SNR, thus confirming the claims made in Theorem 2. Fig. \ref{fig:Limited_feedback} also proves the analysis accurate even when applied to finite-size systems.
\section{Channel Training in TDD Systems}

We now consider a TDD system where uplink and downlink transmissions alternate on the same channel. The channel estimation at the transmitter is obtained from known pilot symbols sent by the users. Let $T$ be the channel coherence interval, i.e. the number of channel uses for which the channel is constant. The interval $T$ is divided into $T_t$ uses for uplink training and $T-T_t$ uses for the downlink transmission of data. The channel state information at the users is provided by a training phase in the downlink. However, a minimal amount of training is sufficient for this phase, and we can therefore neglect the overhead due to the downlink training \cite{MarzettaISIT09}.

Each user transmits the same number $T_t>K$ of orthogonal pilot symbols to the base station, which estimates all the $K$ channels simultaneously. The channel estimation error at the base station depends on the number $T_t$ as well as on the SNR $\rho_{ul}$ on the uplink channel, and it is given by \cite{Caire10}
\begin{equation}
\tau^2 = \frac{1}{1+T_t\rho_{ul}}.
\end{equation}
Hence we can write the secrecy sum-rate as a function of $T_t$
\begin{equation}
\hat{R}_s^{\circ} \!=\! \frac{T-T_t}{T} K \left[ \log_2 \frac{1 \!+\! g ( \beta,\tilde{\xi} ) \frac{\tilde{\rho} + \frac{ \tilde{\xi} \tilde{\rho}}{\beta} \left[ 1 + g ( \beta,\tilde{\xi} ) \right]^2}{\tilde{\rho} + \left[ 1 + g ( \beta,\tilde{\xi} )  \right]^2 } }{1 \!+\! \rho \left[ \tau^2 + \frac{1-\tau^2}{\left(1 + g ( \beta,\tilde{\xi} )\right)^2} \right]} \right]^+ \!\!
\label{eqn:Rs_TDD}
\end{equation}
where in the case of TDD systems we have
\begin{equation}
\tilde{\xi} = \frac{\xi(1+T_t\rho_{ul})}{T_t\rho_{ul}} \quad \textrm{and} \quad \tilde{\rho}=\frac{\rho T_t\rho_{ul}}{\rho+1+T_t\rho_{ul}}.	
\end{equation}
We now study the optimal value of the training interval $T_t$ for high SNR, in the case when the uplink and downlink SNR $\rho_{ul}$ and $\rho$ both grow with a finite ratio $c \stackrel{\triangle}{=} \rho/\rho_{ul}$.

\begin{Theorem}
In the large-system regime, let $\rho$, $\rho_{ul}$ be large with $c = \rho/\rho_{ul}$ constant. Then, an approximation of the secrecy sum-rate maximizing amount of channel training $T_t$ can be obtained as a solution of the equations
\begin{align}
&T_t^3q + T_t^2 (cq-Kc)\nonumber\\
&\enspace + T_t(c^2q+KcT-2Kc^2)+2Kc^2T=0,
\label{eqn:Tt1}\\
&T_t^34q + T_t^2 (4cq-4Kc) \nonumber\\
&\enspace + T_t(3c^2q+4KcT-6Kc^2)+6KTc^2=0,
\label{eqn:Tt2}
\end{align}
for $\beta<1$ and $\beta=1$, respectively, and with $q \stackrel{\triangle}{=} -\bar{R}_s^{\circ}\log2$.
\end{Theorem}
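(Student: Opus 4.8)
The plan is to treat the large-system rate $\hat{R}_s^{\circ}$ in (\ref{eqn:Rs_TDD}) as a smooth function of the single real variable $T_t$ and to locate its interior maximizer through the stationarity condition $\mathrm{d}\hat{R}_s^{\circ}/\mathrm{d}T_t=0$. Writing $\hat{R}_s^{\circ}(T_t)=\tfrac{T-T_t}{T}\,K\,\psi(T_t)$, where $\psi(T_t)=\log_2 F(T_t)$ is the per-user secrecy rate and $F$ denotes the SINR ratio inside the bracket of (\ref{eqn:Rs_TDD}), and assuming the high-rate regime in which the $[\cdot]^+$ is inactive, the product rule turns stationarity into $\psi(T_t)=(T-T_t)\,\psi'(T_t)$. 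This already exposes the training trade-off: the prefactor $(T-T_t)/T$ decreases in $T_t$ whereas $\psi$ increases (more pilots sharpen the estimate), so an interior optimum is guaranteed.

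Next I would feed in the high-SNR form of the deterministic equivalents of Theorem 1. As $\rho,\rho_{ul}\to\infty$ with $c=\rho/\rho_{ul}$ fixed, one has $\tilde{\xi}\to\xi$ and hence $g(\beta,\tilde{\xi})\to g(\beta,\xi)$, while $\tau^2=(1+T_t\rho_{ul})^{-1}$ yields $C:=\tau^2\rho\to c/T_t$. Substituting these limits into (\ref{eqn:SINR_hat_deteq})--(\ref{eqn:SINR_tilde_hat_deteq}) and comparing with the perfect-CSIT rate (\ref{eqn:Rs_deteq_tau0}) gives $\psi(T_t)=\bar{R}_s^{\circ}/K-\Delta(T_t)$, where $\Delta$ is exactly the $T_t$-parametrized version of the rate gap analysed in Theorem 2. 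Re-using the high-SNR gap expressions from the proof of Theorem 2 with $C=c/T_t$ produces $\Delta(T_t)=\log_2\!\left(1+C+C^{2}\right)$ for $\beta<1$ and $\Delta(T_t)=\log_2\!\left(1+C+\tfrac34 C^{2}\right)$ for $\beta=1$.

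Since $\bar{R}_s^{\circ}/K$ grows like $\log_2\rho$ while $\Delta(T_t)$ stays bounded, I would then approximate $\psi\approx\bar{R}_s^{\circ}/K$ in the level term on the left of the stationarity condition, while keeping the exact derivative $\psi'=-\Delta'$ on the right; this is the sense in which the statement is an approximation. Using $q=-\bar{R}_s^{\circ}\ln 2$ and $\Delta'\ln 2=-\tfrac{c}{T_t^{2}}\cdot\tfrac{1+2C}{1+C+C^{2}}$ (for $\beta<1$, with the analogous expression for $\beta=1$), the condition becomes $-q/K=-(T-T_t)\,\Delta'\ln 2$, i.e.
\[
-\frac{q}{K}=\frac{c\,(T-T_t)\,(T_t+2c)}{T_t\,\left(T_t^{2}+cT_t+c^{2}\right)}
\]
for $\beta<1$. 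Clearing the denominator gives $q\,T_t\!\left(T_t^{2}+cT_t+c^{2}\right)+Kc\,(T-T_t)(T_t+2c)=0$, which expands termwise to (\ref{eqn:Tt1}); repeating the calculation with the $\beta=1$ gap polynomial $1+C+\tfrac34 C^{2}$ yields (\ref{eqn:Tt2}).

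The main obstacle is establishing the two gap polynomials $1+C+C^{2}$ and $1+C+\tfrac34 C^{2}$ with the correct coefficients, because the resulting cubics are very sensitive to them. A naive leading-order substitution in (\ref{eqn:SINR_hat_deteq})--(\ref{eqn:SINR_tilde_hat_deteq}) produces a degenerate, flat balance and the wrong constant; the quadratic-in-$C$ coefficient is pinned down only by the sub-leading asymptotics of the perfect-CSIT optimal regularizer. Concretely, for $\beta<1$ one needs $\xi\sim\beta/(2\rho)$, so that $g(\beta,\xi)\sim\tfrac{2(1-\beta)}{\beta}\rho$ and $(1+g)^{2}\sim\rho^{2}$, which is what keeps the terms $\tau^{2}(1+g)^{2}$ and $\tau^{4}\rho(1+g)^{2}$ in $\mu$ at order $O(1)$; for $\beta=1$ the scaling changes to $\xi\sim 1/(3\rho)$ and $(1+g)^{2}\sim 3\rho$, which is precisely why the two regimes give different polynomials and hence the two distinct cubics. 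A secondary point to justify is that, in the high-SNR operating regime of interest, dropping $\Delta$ from the level term and dropping the $[\cdot]^+$ are both legitimate.
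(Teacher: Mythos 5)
Your proposal follows essentially the same route as the paper's proof: the same decomposition $\hat{R}_s^{\circ} = \frac{T-T_t}{T}\left( \bar{R}_s^{\circ} - K \Delta \right)$ with the identical high-SNR gap limits $\Delta \to \log_2\left(1+\tfrac{c}{T_t}+\tfrac{c^2}{T_t^2}\right)$ for $\beta<1$ and $\Delta \to \log_2\left(1+\tfrac{c}{T_t}+\tfrac{3c^2}{4T_t^2}\right)$ for $\beta=1$, followed by setting the $T_t$-derivative to zero and applying the high-SNR approximation (dropping the bounded $\Delta$ from the level term while retaining $\Delta'$), which correctly reproduces the cubics (\ref{eqn:Tt1}) and (\ref{eqn:Tt2}). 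Your write-up is in fact more explicit than the paper's terse proof, usefully pinning down what the ``further high-SNR approximations'' are and verifying the regularizer asymptotics $\xi\sim\beta/(2\rho)$ and $\xi\sim 1/(3\rho)$ that underlie the two gap polynomials.
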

\begin{proof}
Rewrite $\hat{R}_s^{\circ} = \frac{T-T_t}{T} \left( \bar{R}_s^{\circ} - K \Delta \right)$, where
\begin{equation}
\lim\limits_{\rho,\rho_{ul} \rightarrow \infty} \Delta = \left\{ 
  \begin{array}{l l l}
    \log_2\left(1+\frac{c^2}{T_t^2}+\frac{c}{T_t} \right) & \enspace\!\!\! \textrm{for $\beta < 1$} \\
    \log_2\left(1+\frac{3c^2}{4T_t^2}+\frac{c}{T_t} \right) & \enspace\!\!\! \text{for $\beta = 1$} \\
  \end{array} \right. .
\end{equation}
Then (\ref{eqn:Tt1}) and (\ref{eqn:Tt2}) can be obtained by setting $\partial \hat{R}_s^{\circ} / \partial T_t = 0$ and after further high-SNR approximations.
\end{proof}

Fig. \ref{fig:TDD} shows the simulated optimal relative amount of training $T_t/T$ versus the downlink SNR $\rho$, for a system with $M=K=10$, and $c=10$. This is compared to the high-SNR approximations obtained from Theorem 3. Three sets of curves are shown, each for a different coherence time $T$. The figure shows that for increasing SNR, the channel estimation becomes more accurate, and less resources should be allocated to channel training. At high SNR, the optimal amount of training converges to the value predicted by the analysis and provided in Theorem 3.

\begin{figure}
\centering
\includegraphics[width=\columnwidth]{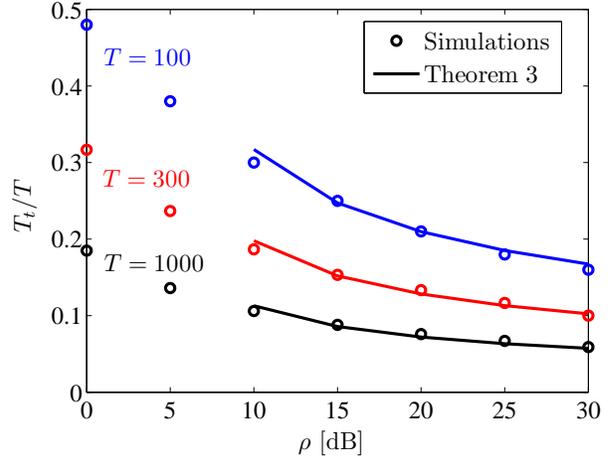}
\caption{Optimal relative amount of training $T_t/T$ vs high-SNR approximation, for $M=K=10$ and $c=\rho/\rho_{ul}=10$.}
\label{fig:TDD}
\end{figure} 
\section{Conclusion}
In this paper, we considered regularized channel inversion precoding for the MISO broadcast channel with confidential messages, under imperfect channel state information at the transmitter. For this system set-up, we first studied the precoder in the large-system regime, and obtained a deterministic equivalent for the achievable secrecy sum-rate. This analysis proved to be accurate even for finite-size systems. We then determined the amount of feedback required to the users in an FDD system to ensure a constant high-SNR rate gap to the case with perfect channel state information, so that the multiplexing gain is not affected. We further considered the case of TDD systems and studied the optimum amount of channel training that maximizes the high-SNR secrecy sum-rate.
\bibliographystyle{IEEEbib}
\bibliography{IEEEabrv,Bib_Giovanni}
\end{document}